\documentclass[noendauthors]{birkjour}
\usepackage[T1]{fontenc}
\usepackage{amssymb}
\usepackage{microtype}
\usepackage{etoolbox}
\usepackage{placeins}
\usepackage{float}
\usepackage[hidelinks]{hyperref}
\numberwithin{equation}{section}
\makeatletter
\patchcmd{\@setauthors}{\endtrivlist}{
  \@setaddresses\global\let\addresses\@empty
  \endtrivlist
}
\makeatother

\newcommand{\A}{\mathcal A}
\newcommand{\Hh}{\mathcal H}
\newcommand{\Tr}{\operatorname{Tr}}
\newcommand{\diag}{\operatorname{diag}}
\newcommand{\spec}{\operatorname{spec}}
\newcommand{\dist}{\mathrm d}
\newcommand{\norm}[1]{\lVert #1\rVert}
\newtheorem{prop}{Proposition}[section]

\begin{document}
\title{Helstrom geometry with time-dependent\\ spectral calibration}
\author{Kaushlendra Kumar}
\address{School of Mathematical Sciences\newline
Queen Mary University of London\newline
Mile End Road, London E1 4NS, United Kingdom}
\email{kaushlendra.kumar@qmul.ac.uk}
\keywords{Connes spectral distance, finite spectral geometry, information backflow, qubit trace distance}

\begin{abstract}
In a recent work we introduced a finite scalar--qubit--scalar spectral triple whose Connes distance reproduces the Bloch-ball trace metric up to a factor $2/\Lambda$, where $\Lambda^2=\mu_L^2+\mu_R^2$ for isotropic link strengths $\mu_L,\mu_R$ appearing in the proposed Dirac operator. Here we analyse time-dependent variation of these links, independent of the qubit dynamics. In doing so, we separate the distinguishability and calibration contributions to the spectral rate. A depolarizing example demonstrates spectral growth despite a monotonic contraction of the trace-distance. More generally, we characterize the admissible trace-distance histories compatible with a given positive spectral record and exhibit contracting and reviving realizations. At fixed $\Lambda$, we also study a sinusoidal link imbalance that preserves the Dirac spectrum and qubit metric, while the scalar--scalar distance oscillates with half the period of the imbalance.
\end{abstract}
\maketitle

\raggedbottom

\section{Introduction}\label{sec:intro}

For a quantum system, trace distance measures the distinguishability of states. Given a system prepared in one of two known states with density matrices $\rho$ and $\sigma$, each with equal prior probability, an optimal measurement identifies the preparation with success probability
\begin{equation}\label{eq:helstrom}
 p_{\rm succ}^{\rm opt}(\rho,\sigma)=\frac12\bigl(1+T(\rho,\sigma)\bigr),
 \qquad T(\rho,\sigma)=\frac12\norm{\rho-\sigma}_1.
\end{equation}
This is Helstrom's formula~\cite{Helstrom}, where $\norm{X}_1=\Tr\sqrt{X^*X}$ is the trace norm. We assume equal priors throughout.

On the other hand, there exists a separate notion of spectral distance in noncommutative geometry (NCG) à la Connes. Here, distance between states can be obtained once a spectral triple $(\A,\Hh,D)$ is specified for an involutive algebra $\A$ represented via $\pi$ on a Hilbert space $\Hh$ and $D$ a self-adjoint Dirac operator. For two states, i.e.\ normalized positive linear functionals $\phi,\psi:\A\to\mathbb C$, the Connes spectral distance reads~\cite{Connes}
\begin{equation}\label{eq:connes}
 \dist_D(\phi,\psi)=
 \sup_{\substack{a=a^*\\L_D(a)\leq1}}|\phi(a)-\psi(a)|,
 \qquad L_D(a)=\norm{[D,\pi(a)]}_{\rm op}.
\end{equation}
Here $L_D$ is the Lipschitz seminorm, defined using the operator norm. It should be noted that algebra elements $a\in\A$ with $L_D(a)=0$ can be scaled arbitrarily within the constraint, so the distance is infinite whenever $\phi-\psi$ does not vanish on them. All our data are finite dimensional, and no real structure (relevant in the reformulation of particle physics within NCG) is imposed.

For the qubit algebra $M_2(\mathbb C)$ acting irreducibly on $\mathbb C^2$, it turns out the commutator has such an unconstrained direction. A $2\times2$ Dirac matrix with distinct eigenvalues commutes with the Pauli matrix along its eigenaxis, so that only states with the same Bloch component along this axis are at finite distance~\cite{IKM}. The spin-$1/2$ fuzzy sphere removes the obstruction with an equivariant Dirac operator~\cite{DLV}. In our Hilbert--Schmidt operatorial treatment of the Moyal plane and fuzzy sphere~\cite{MoyalFuzzy}, we also recovered the pure-state chordal distance at $n=1/2$ using Dirac eigenspinors. In~\cite{Static} we removed the obstruction by attaching two scalar summands to the qubit block through identity links, which constrain all Pauli directions at once. The spectral distance between qubit states is then the trace distance up to a constant fixed by the two link strengths.

In this article we let the two links depend on time, independently of the quantum channel under which the qubit states evolve. For open quantum systems, a temporary increase of the trace distance between two states under a common dynamical map witnesses information backflow in the sense of Breuer, Laine and Piilo (BLP)~\cite{BLP,Review}. We ask whether spectral growth still has this meaning when the Dirac operator changes in time. We show that a raw spectral record does not determine the trace-distance history, while scalar distances at the same times fix the missing scale. Since only one combination of the two links enters the qubit metric, we also consider link deformations that preserve this combination and examine their effect on the scalar distances.

\section{Scalar-anchored Helstrom geometry}\label{sec:static}

We recall the finite geometry of~\cite{Static}, where these results are proved. Each scalar sector is represented on a copy of $\mathbb C^2$, allowing identity blocks as the off-diagonal Dirac links. The finite spectral data read
\begin{equation}\label{eq:data}
 \begin{gathered}
 \A=\mathbb C_L\oplus M_2(\mathbb C)\oplus\mathbb C_R,
 \quad \Hh=\mathbb C^2\oplus\mathbb C^2\oplus\mathbb C^2,\\
 \pi(x,m,z)=\diag(xI_2,m,zI_2),\\
 D=\begin{pmatrix}
 0&\mu_L I_2&0\\ \mu_L I_2&0&\mu_R I_2\\0&\mu_R I_2&0
 \end{pmatrix},\quad \mu_L,\mu_R>0.
 \end{gathered}
\end{equation}
The links compare a qubit observable with a scalar value in an isotropic manner without singling out a Bloch direction. For any qubit density matrix $\rho$, we identify the sector states as
\begin{equation}\label{eq:states}
 \omega_L(x,m,z)=x,\qquad
 \omega_\rho(x,m,z)=\Tr(\rho m),\qquad
 \omega_R(x,m,z)=z.
\end{equation}
Next, using $D=D_0\otimes I_2$, one finds
\begin{equation}\label{eq:spectrum}
 \begin{gathered}
 D_0=\begin{pmatrix}0&\mu_L&0\\\mu_L&0&\mu_R\\0&\mu_R&0\end{pmatrix},\\
 \det(uI_3-D_0)=u(u^2-\Lambda^2),
 \quad \Lambda=\sqrt{\mu_L^2+\mu_R^2},
 \end{gathered}
\end{equation}
so that $\spec(D)=\{-\Lambda,0,\Lambda\}$, where each eigenvalue is doubly degenerate.

\begin{figure}[tb]
 \centering
 \includegraphics[width=0.9\linewidth]{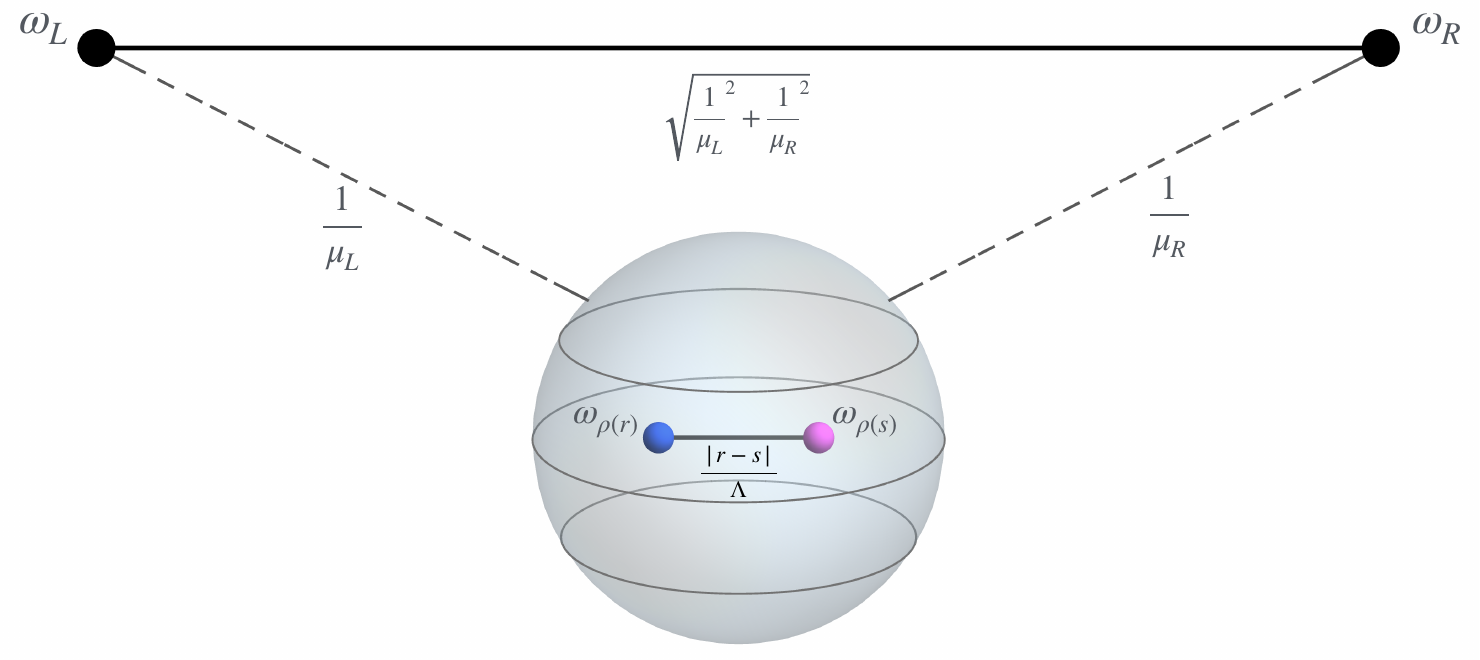}
 \caption{Illustration of scalar-anchored calibration geometry, adapted from~\cite{Static}. The scalar states $\omega_L$ and $\omega_R$ lie at distances $\ell_L=\mu_L^{-1}$ and $\ell_R=\mu_R^{-1}$ from every qubit state, with $\dist_D(\omega_L,\omega_R)^2=\ell_L^2+\ell_R^2$. Inside the qubit block the spectral distance is the calibrated chord $\dist_D(\omega_{\rho(r)},\omega_{\rho(s)})=\Lambda^{-1}|r-s|$.}
 \label{fig:static-geometry}
\end{figure}

For a self-adjoint element $(x,m,z)$, we diagonalize $m$ and use the same unitary in all three copies of $\mathbb C^2$, leaving the identity links unchanged. Reordering the basis then splits the commutator into two blocks, one for each eigenvalue $\lambda_\pm$ of $m$, and gives the Lipschitz seminorm~\cite{Static}
\begin{equation}\label{eq:channels}
 L_D(x,m,z)^2=
 \max_{i=\pm}\bigl\{\mu_L^2(\lambda_i-x)^2+
                            \mu_R^2(\lambda_i-z)^2\bigr\}.
\end{equation}
The middle-state difference is independent of the scalar entries. We can therefore minimize~\eqref{eq:channels} over those entries at fixed middle observable to determine whether it has an admissible scalar extension. Both scalars take the eigenvalue midpoint at the minimum. Optimizing the remaining eigenvalue gap and projector direction gives~\cite{Static}, for any two qubit density matrices,
\begin{equation}\label{eq:static}
 \dist_D(\omega_\rho,\omega_\sigma)
 =\frac{\norm{\rho-\sigma}_1}{\Lambda}
 =\frac{2}{\Lambda}T(\rho,\sigma).
\end{equation}
Using Bloch coordinates $\rho(r)=(I_2+r\cdot\sigma)/2$, with $\sigma$ denoting the Pauli vector and $r$ inside the Bloch ball $B^3:=\{r\in\mathbb R^3:|r|\leq1\}$, we obtain a chord length on the full mixed-state ball,
\begin{equation}\label{eq:chord}
 \dist_D(\omega_{\rho(r)},\omega_{\rho(s)})=\frac{|r-s|}{\Lambda}.
\end{equation}
The factor $\Lambda/2$ thus converts spectral distances between qubit states into trace distances.

Moreover, the scalar states lie at fixed distances from every qubit state,
\begin{equation}\label{eq:radii}
 \ell_L:=\dist_D(\omega_L,\omega_\rho)=\mu_L^{-1},
 \qquad
 \ell_R:=\dist_D(\omega_R,\omega_\rho)=\mu_R^{-1},
\end{equation}
while the distance between the two scalar states satisfies the Pythagorean relation
\begin{equation}\label{eq:pythagoras}
 \dist_D(\omega_L,\omega_R)
 =\sqrt{\mu_L^{-2}+\mu_R^{-2}}
 =\sqrt{\ell_L^2+\ell_R^2}.
\end{equation}
The scalar states in Fig.~\ref{fig:static-geometry} sit separately from the qubit states (shown as a colored ball). Since~\eqref{eq:radii} holds for every qubit state, the endpoints of the dashed links within the ball are arbitrary. A route through any such state has length $\ell_L+\ell_R$, strictly greater than the direct scalar distance~\eqref{eq:pythagoras}.

\section{Time-dependent spectral calibration}\label{sec:time}

We now let the couplings in~\eqref{eq:data} be positive differentiable functions of time, with $\A$, $\Hh$ and $\pi$ fixed. The qubit states evolve under a differentiable family $\{\Phi_t\}_{t\geq0}$ of completely positive trace-preserving (CPTP) maps\footnote{Complete positivity requires $(\Phi_t\otimes\mathrm{id}_n)(X)\geq0$ for every joint positive operator $X\geq0$ and every auxiliary dimension $n$. Trace preservation requires $\Tr[\Phi_t(Y)]=\Tr Y$ for every qubit operator $Y$.}, $\rho_t=\Phi_t(\rho_0)$ and $\sigma_t=\Phi_t(\sigma_0)$ with $\Phi_0=\mathrm{id}$. At this stage, we do not require the semigroup property, $\Phi_{t+s}=\Phi_t\circ\Phi_s$, for such CPTP maps. Also, the family is specified independently of $D(t)$, which only fixes the metric at time $t$. As long as the links stay proportional to $I_2$, the identity~\eqref{eq:static} holds at every instant,
\begin{equation}\label{eq:instant}
 \begin{gathered}
 \dist(t):=\dist_{D(t)}(\omega_{\rho_t},\omega_{\sigma_t})
 =\frac{2T(t)}{\Lambda(t)},\\
 T(t):=T(\rho_t,\sigma_t),
 \qquad \Lambda(t)=\sqrt{\mu_L(t)^2+\mu_R(t)^2}.
 \end{gathered}
\end{equation}
We can now use~\eqref{eq:instant} to compute the two rate contributions by writing $T=\Lambda\dist/2$ and differentiating the expression,
\begin{equation}\label{eq:rate}
 \dot T(t)=\frac12\bigl[\Lambda(t)\dot{\dist}(t)
                              +\dot\Lambda(t)\dist(t)\bigr].
\end{equation}
For $T(t)>0$ this relation can also be written in terms of fractional rates,
\begin{equation}\label{eq:lograte}
 \frac{\dot{\dist}(t)}{\dist(t)}
 =\frac{\dot T(t)}{T(t)}-\frac{\dot\Lambda(t)}{\Lambda(t)}.
\end{equation}
Under the assumed differentiability of the states, this logarithmic relation applies wherever $T(t)>0$, equivalently $\rho_t\neq\sigma_t$.

To relate these rates to the qubit dynamics, let us consider Bloch vectors $r(t)$ and $s(t)$ for the evolving states $\rho_t$ and $\sigma_t$, respectively, obeying a common time-local affine equation,
\begin{equation}\label{eq:affine}
 \dot r=A(t)r+c(t),\qquad \dot s=A(t)s+c(t),
 \qquad A(t)\in M_3(\mathbb R),\quad c(t)\in\mathbb R^3.
\end{equation}
We can now subtract the two equations to remove the common translation, such that their difference $q:=r-s$ obeys $\dot q=Aq$. This allows us to write the trace distance as $T=|q|/2$ using  \eqref{eq:static} and \eqref{eq:chord}, hence $q^{\mathsf T}q=4T^2$. Next, we can differentiate this identity following $\dot q=Aq$ and find, for $q\neq0$,
\begin{equation}\label{eq:affine-rate}
 \frac{\dot T}{T}
 =\frac{q^{\mathsf T}Sq}{q^{\mathsf T}q},
 \qquad S:=\frac12(A+A^{\mathsf T}),
\end{equation}
where the quadratic form depends only on the symmetric part $S$. The antisymmetric part produces an instantaneous rotation of $q$ and hence drops out of its length derivative. Substituting~\eqref{eq:affine-rate} into~\eqref{eq:lograte}, we then obtain the corresponding spectral rate,
\begin{equation}\label{eq:affine-spectral-rate}
 \frac{\dot{\dist}}{\dist}
 =\frac{q^{\mathsf T}Sq}{q^{\mathsf T}q}-\frac{\dot\Lambda}{\Lambda}.
\end{equation}

Let us now consider the depolarizing semigroup,
\begin{equation}\label{eq:depolarizing}
 \Phi_t(\rho)=e^{-\gamma t}\rho+(1-e^{-\gamma t})\frac{I_2}{2},
 \qquad \gamma>0,
\end{equation}
for which $\Phi_{t+s}=\Phi_t\circ\Phi_s$ and $\dot\rho_t=-\gamma(\rho_t-I_2/2)$ (see~\cite{GKS}). The corresponding Bloch-affine form \eqref{eq:affine} fixes $A=S=-\gamma I_3$ and $c=0$, and therefore integrating~\eqref{eq:affine-rate} gives
\begin{equation}\label{eq:depolarizing-trace}
 T(t)=T_0e^{-\gamma t},
\end{equation}
where $T_0>0$ is the initial trace distance.

To obtain spectral growth despite this contraction, we need a decreasing Dirac scale, and we consider the following simple choice that stays bounded away from zero,
\begin{equation}\label{eq:example-scale}
 \Lambda(t)=\Lambda_0\bigl(1-\eta+\eta e^{-\kappa t}\bigr),
 \qquad 0<\eta<1,\quad \kappa>0,
\end{equation}
with $\Lambda_0>0$, that can be realized for instance by equal links $\mu_L(t)=\mu_R(t)=\Lambda(t)/\sqrt2$. It decreases monotonically from $\Lambda_0$ to $\Lambda_0(1-\eta)>0$ and is independent of the qubit dynamics. Combining this scale with~\eqref{eq:depolarizing-trace} and substituting into~\eqref{eq:instant} and~\eqref{eq:lograte} gives
\begin{equation}\label{eq:example-distance}
 \begin{aligned}
 \dist(t)&=\frac{2T_0e^{-\gamma t}}
 {\Lambda_0(1-\eta+\eta e^{-\kappa t})},\\
 \frac{\dot{\dist}(t)}{\dist(t)}
 &=-\gamma+\frac{\eta\kappa e^{-\kappa t}}
 {1-\eta+\eta e^{-\kappa t}}.
 \end{aligned}
\end{equation}
The positive scale term decreases from $\eta\kappa$ to zero. If $\eta\kappa>\gamma$, it initially exceeds the contraction rate, causing the spectral length $\dist$ to grow until the two contributions cancel out, after which $d$ starts decreasing. Solving $\dot{\dist}(t_*)=0$ in~\eqref{eq:example-distance} gives
\begin{equation}\label{eq:turning}
 t_* =\frac1\kappa\log\frac{\eta(\kappa-\gamma)}{\gamma(1-\eta)}.
\end{equation}
Since $\eta\kappa>\gamma$ implies $\eta(\kappa-\gamma)>\gamma(1-\eta)$, the logarithm in~\eqref{eq:turning} has an argument greater than one and $t_*>0$.

For illustration, we choose $\gamma=0.20$, $\kappa=0.50$ and $\eta=0.56$ in fixed units of time, giving $t_*\simeq1.293$ from~\eqref{eq:turning}. The corresponding evolution is shown in Fig.~\ref{fig:two-effects}(a), with both distances normalized to their initial values. As time grows, the spectral distance attains its maximum of approximately $1.053\,\dist(0)$ at $t_*$, while the trace distance has decreased to approximately $0.772\,T_0$. This growth calls for a natural comparison with the BLP criterion~\cite{BLP,Review}; the latter identifies information backflow through an increase in trace distance. To this end, notice here that equation \eqref{eq:rate} gives $\dot T=\frac{1}{2}(\Lambda\dot{\dist}+\dot\Lambda\dist)=-\gamma T<0$ for every distinct evolving qubit pair, including in the interval $0<t<t_*$ where $\dot{\dist}>0$, meaning that the depolarizing evolution has {\em no} BLP backflow.

We saw in the depolarizing example above that $\Lambda(t)$ was given and $\dist(t)$ followed from it. Another interesting scenario is when only the raw record $\dist(t)$ of an evolving qubit pair is known. Which trace-distance histories can furnish this record if the scale is unknown? For any proposed positive history $T(t)$, equation~\eqref{eq:instant} fixes $\Lambda(t)=2T(t)/\dist(t)$, but such a history must also be realized by a common CPTP family. It is known \cite{Review} that contractivity requires $T(t)\leq T(0)\leq1$, and we show in the following how this condition is sufficient as well.

\begin{prop}[Compatible trace-distance histories]\label{prop:nonid}
Consider a given differentiable spectral-distance record $\dist(t)>0$ on $[0,t_1]$, along with a differentiable candidate trace-distance history $T(t)$. There exist two initial qubit states, a differentiable CPTP family $\Phi_t$ with $\Phi_0=\mathrm{id}$, and differentiable Dirac link strengths $\mu_L(t),\mu_R(t)>0$ realizing the trace-distance history $T(t)$ and spectral-distance record $\dist(t)$ through~\eqref{eq:instant} if and only if
\begin{equation}\label{eq:admissible}
 0<T(t)\leq T(0)\leq1,\qquad t\in[0,t_1].
\end{equation}
For such a realization, the calibration scale is $\Lambda(t)=2T(t)/\dist(t)$.
\end{prop}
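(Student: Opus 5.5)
We prove the two directions separately, with the construction for sufficiency doing most of the work.

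\emph{Necessity.} Suppose a realization exists. Then $\dist(t)>0$ and $\Lambda(t)>0$ in~\eqref{eq:instant} force $T(t)=\Lambda(t)\dist(t)/2>0$. The bound $T(0)\le1$ is just the trace distance of two density matrices, since $\norm{\rho_0-\sigma_0}_1\le2$. The bound $T(t)\le T(0)$ is the contractivity of the trace norm under positive trace-preserving maps, applied to $\Phi_t$ acting on $\rho_0-\sigma_0$, as recalled from~\cite{Review}. The formula $\Lambda(t)=2T(t)/\dist(t)$ is~\eqref{eq:instant} solved for $\Lambda$.

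\emph{Sufficiency.} Given~\eqref{eq:admissible}, we build everything explicitly. Take Bloch vectors $r_0=T(0)\,e_3$ and $s_0=-T(0)\,e_3$. Both lie in $B^3$ because $T(0)\le1$, and~\eqref{eq:chord} gives $T(\rho_0,\sigma_0)=|r_0-s_0|/2=T(0)$. For the channel we use a time-dependent depolarizing-type map
\begin{equation*}
 \Phi_t(\rho)=p(t)\rho+(1-p(t))\frac{I_2}{2},\qquad p(t):=\frac{T(t)}{T(0)}\in(0,1].
\end{equation*}
This $p$ is differentiable and satisfies $p(0)=1$, so $\Phi_0=\mathrm{id}$. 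For $p\in[0,1]$ the map is a convex combination of the identity channel and the completely depolarizing channel $\rho\mapsto\Tr(\rho)I_2/2$, hence CPTP. It acts on Bloch vectors as $r\mapsto p\,r$, so $T(\rho_t,\sigma_t)=p(t)T(0)=T(t)$. Note that no semigroup property is needed, which the section explicitly allows. Finally, set $\Lambda(t)=2T(t)/\dist(t)$, which is positive and differentiable, and take equal links $\mu_L=\mu_R=\Lambda/\sqrt2$. These are positive and differentiable, and~\eqref{eq:instant} then returns $\dist(t)$.

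\emph{Main obstacle.} The delicate point is only the CPTP check for the family. One could instead note that the Pauli-diagonal channel with eigenvalues $(p,p,p)$ satisfies the Fujiwara--Algoet tetrahedron conditions $|p\pm p|\le|1\pm p|$, but the convex-combination argument avoids this. The differentiability of $\Phi_t$ is inherited from that of $T$.
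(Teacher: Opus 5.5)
Your proof is correct and follows the paper's argument essentially step for step. Both use the same necessity argument via $\Lambda,\dist>0$ and trace-norm contractivity, and both build the realization from antipodal initial states with $|r_0-s_0|=2T(0)$, a convex-combination channel contracting their difference by $f(t)=T(t)/T(0)$, and equal links $\mu_L=\mu_R=\Lambda/\sqrt2$ with $\Lambda=2T/\dist$. The only difference is cosmetic. You use a time-dependent depolarizing map, which contracts the whole Bloch ball isotropically, with states on the $e_3$ axis; the paper uses the dephasing map $\frac{1+f}{2}\rho+\frac{1-f}{2}\sigma_3\rho\sigma_3$ with states on the $e_1$ axis.
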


\begin{proof}
In the forward direction, suppose such a realization exists, in which case the corresponding positive links naturally provide the scale $\Lambda(t)>0$. Owing to this, Eq.~\eqref{eq:instant} gives $T(t)=\Lambda(t)\dist(t)/2>0$, and contractivity under the common CPTP maps~\cite[Sec.~II.C.1]{Review} gives $T(t)\leq T(0)\leq1$.

Conversely, assuming a given history $T(t)$ obeying Eq. \eqref{eq:admissible}, we construct an explicit qubit evolution as follows. First, we take the initial Bloch vectors as $r_0=-s_0=(T(0),0,0)$, both of which lie in the Bloch ball since $T(0)\leq1$. Next, using $f(t):=T(t)/T(0)\in(0,1]$, we can employ the standard dephasing channel (a special case of the random-unitary qubit maps in~\cite[Sec.~III.A.3]{Review}),
\begin{equation}\label{eq:dephasing}
 \Phi_t(\rho)=\frac{1+f(t)}{2}\,\rho+\frac{1-f(t)}{2}\,\sigma_3\rho \sigma_3,
\end{equation}
with $\sigma_3$ the third Pauli matrix. The two non-negative coefficients add to one with $f(0)=1$ giving $\Phi_0=\mathrm{id}$, thereby producing a CPTP $\Phi_t$ as a convex combination of unitary conjugations. Now the conjugation by $\sigma_3$ reverses the first two Bloch components, so that by Eq. \eqref{eq:dephasing} the resulting Bloch vector has its first two components multiplied by $f(t)$ while the $z$-component remains invariant. This means that our chosen qubit pair evolves to produce $q(t)=(2f(t)T(0),0,0)$, and hence the required trace distance comes out to be $|q(t)|/2=f(t)T(0)=T(t)$.

All that remains is to choose suitable link strengths that reproduce the given spectral record. We can achieve this via the convenient choice $\mu_L(t)=\mu_R(t)=\Lambda(t)/\sqrt2$, with the positive differentiable scale $\Lambda(t)=2T(t)/\dist(t)$, which recovers the record $\dist(t)$ through Eq. \eqref{eq:instant}.
\end{proof}

In this way, we see that Proposition~\ref{prop:nonid} allows for all possible differentiable histories (not just those with monotonic decrease) that satisfy Eq. \eqref{eq:admissible}. For the depolarizing example~\eqref{eq:example-distance}, the corresponding bound becomes
\[\Lambda(t)\dist(t)\leq\Lambda(0)\dist(0)\leq2\] 
which excludes constant calibration whenever $\dist(t)>\dist(0)$. However, we can utilize the above freedom to construct a history that can support revival according to the BLP criterion. This is what we demonstrate next while preserving the raw record. To achieve this, we multiply the above history~\eqref{eq:depolarizing-trace} and scale~\eqref{eq:example-scale} by the same positive oscillatory factor,
\begin{equation}\label{eq:revival-reading}
 \begin{aligned}
 T_a(t)&=T_0e^{-\gamma t}\bigl(1-a+a\cos\omega t\bigr),\\
 \Lambda_a(t)&=\Lambda(t)\bigl(1-a+a\cos\omega t\bigr),
 \end{aligned}
\end{equation}
where $0<a<1/2$ and $\omega>0$. Notice that the common factor cancels in $2T_a/\Lambda_a=2T/\Lambda=\dist(t)$, so the spectral record is unchanged. The next task is to verify the bound inequality \eqref{eq:admissible} for this alternative history $T_a$, to which end we observe that $T_a(0)=T_0$, $0<1-2a\leq1-a+a\cos\omega t\leq1$ and $0<e^{-\gamma t}\leq1$ for $t\geq0$, so that
\[
 0<T_a(t)\leq T_a(0)\leq1.
\]

Having established this bound, we are now in a position to apply Proposition~\ref{prop:nonid} to $T_a$, and realize it with $f(t)=T_a(t)/T_0$ in the dephasing family~\eqref{eq:dephasing}. This bound compares $T_a(t)$ with its initial value, whereas a BLP revival requires an increase between later times. To see whether this admissible history also revives, we differentiate $T_a$ and evaluate at $t=3\pi/(2\omega)$,
\begin{equation}\label{eq:revival-derivative}
 \begin{aligned}
 \dot T_a(t)&=T_0e^{-\gamma t}\bigl[-\gamma(1-a+a\cos\omega t)-a\omega\sin\omega t\bigr],\\
 \dot T_a\!\left(\frac{3\pi}{2\omega}\right)
 &=T_0e^{-3\pi\gamma/(2\omega)}\bigl[a\omega-\gamma(1-a)\bigr].
 \end{aligned}
\end{equation}
This means a positive derivative at $t=3\pi/(2\omega)$ requires $a\omega>\gamma(1-a)$. Furthermore, requiring this time to lie inside a prescribed observation interval $[0,t_1]$, with $t_1>0$, additionally gives us $\omega>3\pi/(2t_1)$. Combining the two requirements, we thus obtain the following criterion,
\[
 \omega>\max\left\{\frac{\gamma(1-a)}{a},\frac{3\pi}{2t_1}\right\}.
\]
We have now established that the same spectral record~\eqref{eq:example-distance} can arise from the monotonically contracting depolarizing evolution or from a dephasing evolution with information backflow in the above examples, for any prescribed observation interval.

The remaining question is: what additional geometric data distinguish these two realizations, having the same qubit record but different Dirac scales? This is where our scalar-anchored geometry comes to the rescue as follows. Using the scalar radii $\ell_L=1/\mu_L$ and $\ell_R=1/\mu_R$ from~\eqref{eq:radii}, we obtain the following trace distance by plugging \eqref{eq:spectrum} into \eqref{eq:instant},
\begin{equation}\label{eq:reconstruction}
 T(t)=\frac{\dist(t)}{2}\sqrt{\ell_L(t)^{-2}+\ell_R(t)^{-2}}
 =\frac{\dist(t)\,\dist_{D(t)}(\omega_L,\omega_R)}{2\,\ell_L(t)\,\ell_R(t)},
\end{equation}
where the equality is due to \eqref{eq:pythagoras}. For our equal links, $\Lambda_a=h\Lambda$, with $h(t)=1-a+a\cos\omega t$, the corresponding radii become $\ell_{L,R}^{(a)}=\ell_{L,R}/h$. These radii therefore differ whenever $h\neq1$, and inserting them into~\eqref{eq:reconstruction} recovers $T$ and $T_a$, respectively. This means that, given the additional geometric data $\ell_{L,R}$, we can distinguish the two realizations, provided the radii belong to the respective Dirac geometry and are known at the same times as $\dist(t)$.

\begin{figure}[H]
 \centering
 \includegraphics[width=\linewidth]{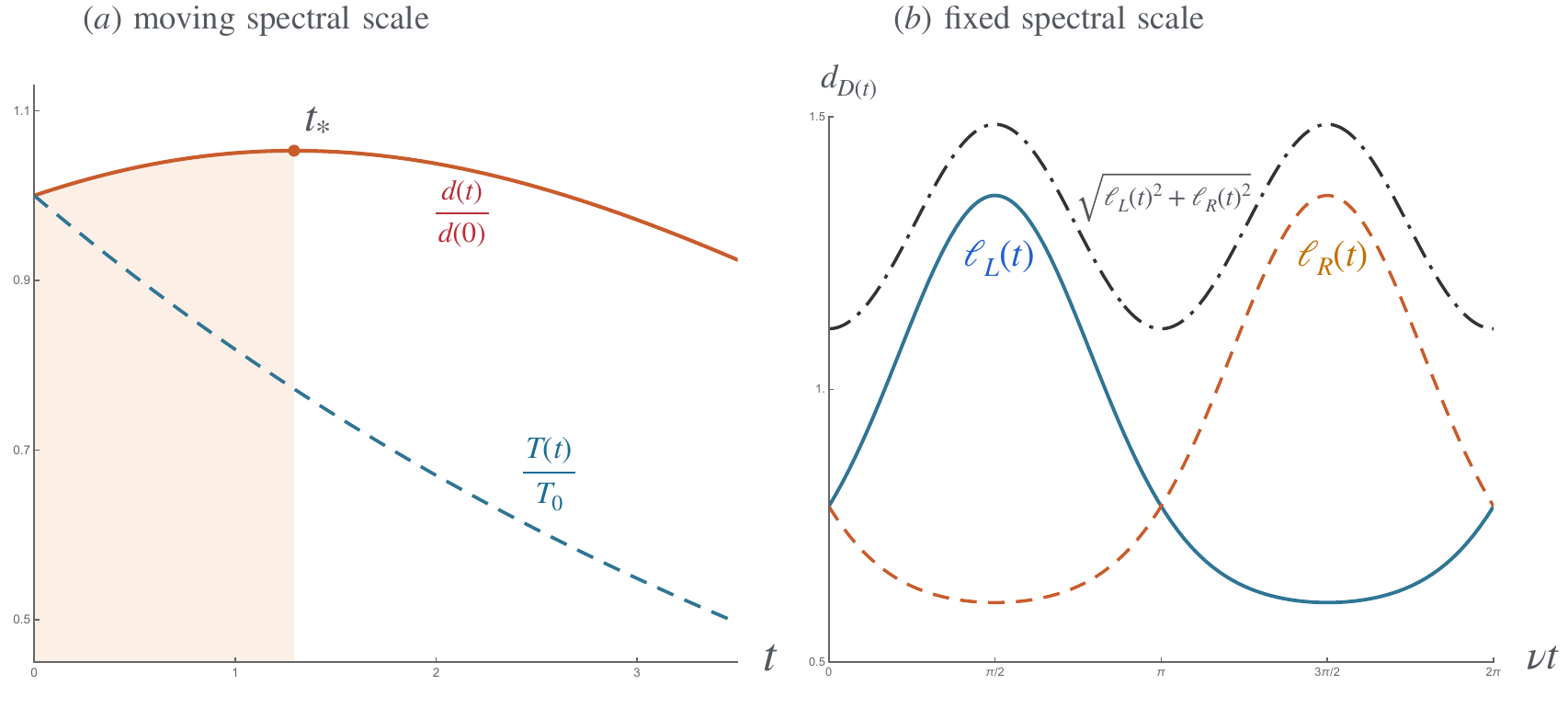}
 \caption{Illustration of the effects of moving Dirac links. (a) Normalized spectral and trace distances under depolarization, with the interval of spectral growth shown in the shaded region. (b) Scalar radii and scalar--scalar distance for an oscillating link imbalance at fixed $\Lambda$.}
 \label{fig:two-effects}
\end{figure}

\section{Scale and imbalance of the two links}\label{sec:links}

So far we have used the scalar radii to get the scale $\Lambda$ and thereby the trace distance. However, the qubit metric~\eqref{eq:static} depends on the two links only through this scale, and thus fails to determine their relative strengths. We now set out to examine this remaining freedom at fixed $\Lambda$, and introduce the scale-imbalance
\begin{equation}\label{eq:beta}
 \beta=\log\frac{\mu_R}{\mu_L}
      =\log\frac{\ell_L}{\ell_R}.
\end{equation}
We can now express the two link strengths in terms of the scale $\Lambda$ and this imbalance $\beta$ using \eqref{eq:spectrum} as follows,
\begin{equation}\label{eq:inverse}
 \mu_L=\frac{\Lambda}{\sqrt{1+e^{2\beta}}},
 \qquad
 \mu_R=\frac{\Lambda e^\beta}{\sqrt{1+e^{2\beta}}}.
\end{equation}
Thus $(\Lambda,\beta)\in(0,\infty)\times\mathbb R$ parametrizes all positive link pairs. We can now express the scalar radii and corresponding scalar--scalar spectral distance in terms of these two parameters by substituting \eqref{eq:inverse} into \eqref{eq:radii} and \eqref{eq:pythagoras},
\begin{equation}\label{eq:beta-distances}
 \ell_L=\frac{\sqrt{1+e^{2\beta}}}{\Lambda},\qquad
 \ell_R=\frac{\sqrt{1+e^{-2\beta}}}{\Lambda},\qquad
 \dist_D(\omega_L,\omega_R)=\frac{2\cosh\beta}{\Lambda}.
\end{equation}
Notice that the scalar--scalar distance is $\beta$-even, so the replacement $\beta\to-\beta$ simply exchanges the two links without changing this distance. Turning on time dependence, we obtain both rates from~\eqref{eq:spectrum} and~\eqref{eq:beta},
\begin{equation}\label{eq:link-rates}
 \frac{\dot\Lambda}{\Lambda}
 =\frac{\mu_L^2}{\Lambda^2}\frac{\dot\mu_L}{\mu_L}
  +\frac{\mu_R^2}{\Lambda^2}\frac{\dot\mu_R}{\mu_R},
 \qquad
 \dot\beta=\frac{\dot\mu_R}{\mu_R}-\frac{\dot\mu_L}{\mu_L}.
\end{equation}
Here the fractional link rates are coupled via the weights $\mu_{L,R}^2/\Lambda^2$ in the fractional scale rate, while their difference determines the change of $\beta$. So keeping $\Lambda$ fixed requires these weighted contributions to cancel. At an instant when $\mu_L=\mu_R$, this reduces to choosing equal and opposite fractional rates.

We therefore see that varying $\beta$ at fixed $\Lambda$ preserves the spectrum $\{-\Lambda,0,\Lambda\}$ as well as the qubit metric, but changes the scalar radii. The commutator seminorm~\eqref{eq:channels} still depends separately on $\mu_L$ and $\mu_R$ in~\eqref{eq:inverse}, so the scalar distances are sensitive to this change. To understand such an effect, let us consider $\Lambda=\Lambda_*$ and $\beta(t)=b\sin(\nu t)$ with $b,\nu>0$, so that the scalar--scalar distance \eqref{eq:beta-distances} becomes
\begin{equation}\label{eq:isospectral}
 \dist_{D(t)}(\omega_L,\omega_R)
 =\frac{2}{\Lambda_*}\cosh\bigl(b\sin(\nu t)\bigr)
\end{equation}
and oscillates between $2/\Lambda_*$ and $2\cosh b/\Lambda_*$. Moreover, the ratio $\ell_L/\ell_R=e^{b\sin(\nu t)}$ determines the sign of $\beta$, whereas the even function $\cosh$ in~\eqref{eq:isospectral} gives a scalar-distance period of $\pi/\nu$. Interestingly, over this half-cycle the radii exchange their values, so the labeled pair has the full period $2\pi/\nu$. Going further, we expand the scalar distance for small $b$ and find
\begin{equation}\label{eq:double-frequency}
 \dist_{D(t)}(\omega_L,\omega_R)
 =\frac{2}{\Lambda_*}
 \left[1+\frac{b^2}{4}\bigl(1-\cos(2\nu t)\bigr)+O(b^4)\right],
\end{equation}
so the leading oscillation is quadratic in $b$ and has twice the frequency of the imbalance. We illustrate these exact curves in Figure~\ref{fig:two-effects}(b) for $b=0.8$ and $\Lambda_*=1.8$.

\FloatBarrier
\section{Discussion and outlook}\label{sec:discussion}

We studied how time-dependent Dirac links affect the scale $\Lambda$ and the repercussions for Connes spectral and calibrated Helstrom geometry in the isotropic scalar-anchored qubit model. In particular, we addressed how the calibration changes induced by moving links combine with the separate changes in qubit distinguishability. We first considered a depolarizing evolution, for which a decreasing scale produces spectral growth in spite of the contracting trace distance. We then asked whether knowing the full spectral record could resolve this ambiguity without knowing the scale. Proposition~\ref{prop:nonid} shows that the same record can arise from both contracting and reviving trace-distance histories. The scalar references distinguish these histories through~\eqref{eq:reconstruction}, provided their distances are taken at the same times and within the same isotropic geometry. Having recovered the scale, we showed how Eq. \eqref{eq:rate} determines the sign of $\dot T$ and hence whether information backflow occurs.

On a related note, we studied the remaining freedom in the relative variation of link strengths as an imbalance $\beta$, once the scale $\Lambda$ has been fixed. This freedom leads to changing scalar distances, while the Dirac spectrum and qubit metric remain unaffected. On the geometry side, this amounts to an ambiguity when only the qubit metric is known. This is resolved by the additional geometric data of the two labeled scalar radii, whereas the scalar--scalar distance alone cannot distinguish an exchange of the links.

The reconstruction~\eqref{eq:reconstruction} assumes that the spectral distances are known. In practice, however, uncertainties associated with sampling of the scalar radii could propagate into the reconstructed trace distance. It would therefore be useful to track such a propagation and determine when an observed increase in the reconstructed trace distance can be distinguished from sampling errors. Contracting and reviving pairs are examples that would allow us to test such a criterion. It would also be interesting to explore whether the scalar-anchored model can be implemented in a physical qubit system, with controllable reference couplings and accessible calibration readings.

Another natural follow-up question concerns anisotropic modifications of the links. Allowing such direction-dependent Dirac links can deform the qubit metric, which could impart directional dependence to the spectral distance as well as a changing overall scale. In particular, we can no longer expect the identity~\eqref{eq:instant} to hold, and recovering the trace distance would require more than the scalar correction used here. It would also be useful to examine whether the Pythagorean spectral geometry of the doubled Moyal plane~\cite{DoubledMoyal} admits a comparable calibration of quantum-state distinguishability.

\subsection*{Acknowledgments}
The author acknowledges support from the DFG through a Walter Benjamin Fellowship, grant no.~515782239.


\begin{thebibliography}{10}
\bibitem{BLP}
Breuer, H.-P., Laine, E.-M., Piilo, J.: Measure for the degree of non-Markovian behavior of quantum processes in open systems. Phys. Rev. Lett. \textbf{103}, 210401 (2009). \href{https://doi.org/10.1103/PhysRevLett.103.210401}{doi:10.1103/PhysRevLett.103.210401}

\bibitem{Review}
Breuer, H.-P., Laine, E.-M., Piilo, J., Vacchini, B.: Colloquium: Non-Markovian dynamics in open quantum systems. Rev. Mod. Phys. \textbf{88}, 021002 (2016). \href{https://doi.org/10.1103/RevModPhys.88.021002}{doi:10.1103/RevModPhys.88.021002}

\bibitem{Connes}
Connes, A.: Noncommutative Geometry. Academic Press (1994)

\bibitem{DLV}
D'Andrea, F., Lizzi, F., V\'arilly, J.C.: Metric properties of the fuzzy sphere. Lett. Math. Phys. \textbf{103}, 183--205 (2013). \href{https://doi.org/10.1007/s11005-012-0590-5}{doi:10.1007/s11005-012-0590-5}

\bibitem{MoyalFuzzy}
Devi, Y.C., Kumar, K., Chakraborty, B., Scholtz, F.G.: Revisiting Connes' finite spectral distance on noncommutative spaces: Moyal plane and fuzzy sphere. Int. J. Geom. Methods Mod. Phys. \textbf{15}, 1850204 (2018). \href{https://doi.org/10.1142/S0219887818502043}{doi:10.1142/S0219887818502043}

\bibitem{GKS}
Gorini, V., Kossakowski, A., Sudarshan, E.C.G.: Completely positive dynamical semigroups of $N$-level systems. J. Math. Phys. \textbf{17}, 821--825 (1976). \href{https://doi.org/10.1063/1.522979}{doi:10.1063/1.522979}

\bibitem{Helstrom}
Helstrom, C.W.: Quantum Detection and Estimation Theory. Mathematics in Science and Engineering, vol.~123. Academic Press (1976)

\bibitem{IKM}
Iochum, B., Krajewski, T., Martinetti, P.: Distances in finite spaces from noncommutative geometry. J. Geom. Phys. \textbf{37}, 100--125 (2001). \href{https://doi.org/10.1016/S0393-0440(00)00044-9}{doi:10.1016/S0393-0440(00)00044-9}

\bibitem{Static}
Kumar, K.: Calibrated Helstrom geometry on the Bloch ball via Connes spectral distance. \href{https://arxiv.org/abs/2606.13824}{arXiv:2606.13824} (2026)

\bibitem{DoubledMoyal}
Kumar, K., Chakraborty, B.: Spectral distances on the doubled Moyal plane using Dirac eigenspinors. Phys. Rev. D \textbf{97}, 086019 (2018). \href{https://doi.org/10.1103/PhysRevD.97.086019}{doi:10.1103/PhysRevD.97.086019}

\end{thebibliography}
\end{document}